\documentclass[singlecolumn,laps,pra,amsmath,amssymb]{revtex4}
\usepackage{amsfonts,amssymb,amsmath}      
\usepackage{dcolumn}
\usepackage{bm}
\usepackage{hyperref}
\usepackage{epstopdf}
\usepackage{graphicx}
\usepackage{mathtools}
\usepackage{amsthm}
\usepackage{color}
\usepackage{mathtools}
\usepackage{algorithm2e}
\allowdisplaybreaks

\newcommand{\nn}{\nonumber \\}

\newcommand{\ket}[1]{|{#1}\rangle}

\def\lsim{\mathrel{\rlap{\lower4pt\hbox{$\sim$}}
    \raise1pt\hbox{$<$}}}                
\def\gsim{\mathrel{\rlap{\lower4pt\hbox{$\sim$}}
    \raise1pt\hbox{$>$}}}                

\newcommand{\lem}[1]{\hyperref[lem:#1]{Lemma~\ref*{lem:#1}}}

\newcommand{\Hamseg}{V}
\newcommand{\appseg}{\tilde V}
\newcommand{\oaa}{V_{\rm oaa}}

\newcommand{\corr}{V_C}
\newcommand{\corrp}{V^+_C}
\newcommand{\appcor}{\tilde V_C}

\newcommand{\obv}{OAA}
\newcommand{\vd}{V_\Delta}
\newcommand{\sa}{s_A}
\newcommand{\appu}{\tilde U}

\newtheorem{theorem}{Theorem}
\newtheorem{lemma}[theorem]{Lemma}

\begin{document}
\title{Improved Hamiltonian simulation via a truncated Taylor series and corrections}
\author{Leonardo Novo${}^{1,2,3}$ and Dominic W. Berry${}^3$}
\affiliation{${}^1$ Instituto de Telecomunica\c{c}\~oes, Physics of Information and Quantum Technologies Group, Portugal}
\affiliation{${}^2$Instituto Superior T\'{e}cnico, Universidade de Lisboa, Portugal}
\affiliation{${}^3$Department of Physics and Astronomy, Macquarie University, Sydney, NSW 2109, Australia}
\date{\today}
   
  \begin{abstract}
We describe an improved version of the quantum simulation method based on the implementation of a truncated Taylor series of the evolution operator. The idea is to add an extra step to the previously known algorithm which implements an operator that corrects the weightings of the Taylor series. This way, the desired accuracy is achieved with an  improvement in the overall complexity of the algorithm. This quantum simulation method is applicable to a wide range of Hamiltonians of interest, including to quantum chemistry problems.
  \end{abstract}
     \maketitle
     
\section{Introduction}
The problem of simulating quantum mechanical evolution was one of the main motivations for the proposal of quantum computers \cite{Feyn}, because the exponential growth of the Hilbert space dimension means that it quickly becomes intractable for classical computers.
Lloyd was the first to explicitly show that a quantum computer can be used as a universal quantum simulator \cite{lloyd} for local quantum systems, by decomposing the evolution operator into a set of quantum gates. Aharonov and Ta-Shma considered the alternative scenario where the Hamiltonian is sparse and there is an efficient procedure to calculate its nonzero entries \cite{ats}. Since then, several improved quantum simulation algorithms were proposed \cite{BACS,wiebe11,childs10,PQSV11,BC12,CW12, BCCKS14, BCCKS15, BCK15}.

Recently, nearly optimal methods for Hamiltonian simulation have been developed \cite{BCCKS14,BCCKS15,BCK15} which achieve an exponential improvement of the complexity in the dependence on the simulation error, $\epsilon$. For sparse Hamiltonian simulation the lower bound on the complexity of \begin{equation}
O\left(\tau+\frac{\log(1/\epsilon)}{\log{\log(1/\epsilon)}}\right),
\end{equation}
was proven in \cite{BCK15}, with $\tau=t\|H\|_{\max} d$, where $d$ is the sparsity, $\|H\|_{\max}$ is the max-norm of the Hamiltonian and $t$ is the evolution time.
This bound is for the \emph{query} complexity, which is the number of calls to oracles for calculating the positions and values of nonzero entries in the Hamiltonian (expressed as a matrix in the computational basis).
The nearly optimal method of \cite{BCK15} has a query complexity of
\begin{equation}
O\left(\tau\frac{\log(\tau/\epsilon)}{\log{\log(\tau/\epsilon)}}\right),
\end{equation}
whereas in \cite{BCCKS14} a similar complexity is achieved, where $\tau$ is replaced by $\tau'=t\|H\|_{\max} d^2$. 
In these methods, the dependence on the time and error appears in the form of a product, in contrast to the proven lower bound, where this dependence appears as a sum. Very recently, the desired dependence as a sum was achieved for sparse Hamiltonian simulation with two different approaches: one involving quantum signal processing techniques \cite{low} and another involving a corrected quantum walk \cite{berry2016corrected}.
However, whether this improvement could be achieved for the simulation of Hamiltonians given by sums of local terms was an open question.
More recently, the signal processing approach was generalized to address the simulation of Hamiltonians given by sums of local terms \cite{low2}.

In this work, we apply an approach similar to that of the corrected quantum walk \cite{berry2016corrected} to improve the complexity of the Hamiltonian simulation method based on a truncated Taylor series \cite{BCCKS15}. The original method described in Ref.~\cite{BCCKS15} is applicable to Hamiltonians given by sums of unitary terms $H=\sum_{\ell} \alpha_{\ell} H_{\ell}$, with $\alpha_{\ell}\geq0$ and has a complexity of 
\begin{equation}
O\left(T \frac{\log{(T/\epsilon)}}{\log\log(T/\epsilon)}\right),
\end{equation} 
where $T:=\sum_{\ell} \alpha_{\ell} t$.
This complexity is in terms of controlled-$H_\ell$ operations (which is equivalent to the query complexity if the Hamiltonian is specified by an oracle).
Our approach is based on using this method to achieve a quantum simulation with a fixed error $\delta$, and then applying a correction operator in order to achieve an error less than $\epsilon$.
This way we obtain a complexity of 
\begin{equation}
O\left(T \frac{\log{(T)}}{\log\log(T)}+\log(1/\epsilon)\right).
\end{equation}
This leads to a close to quadratic improvement with respect to the original approach when the error is on the order of $\exp{(-T)}$. The truncated Taylor series approach is applicable to a wide range of problems where the Hamiltonian naturally decomposes as a sum of terms. These include, for example, the important problem of simulating quantum chemistry \cite{chem1,chem2}, which might be one of the first applications of quantum computers due to its relatively low resource requirements. Another advantage of the truncated Taylor series method is that it requires less additional gates than the algorithms based on quantum walks (that is, gates that are additional to controlled-$H_\ell$ operations or calls to an oracle for the Hamiltonian).

This paper is structured as follows: in Section \ref{sec:2} we provide a summary of the truncated Taylor series simulation algorithm and of the improved approach using a correction; in Section \ref{sec:corrected_Taylor} we show the main result of this work regarding the complexity of the corrected approach. Finally, we present the conclusion in Section \ref{conclusions}. 
 
\section{Background and Summary of the method}\label{sec:2}
\subsection{Simulating Hamiltonian dynamics with a Truncated Taylor series}\label{background}
In Ref.~\cite{BCCKS15}, a method is presented to perform a quantum simulation of the unitary operator $U=\exp(-i H t)$, for a given Hamiltonian $H$, up to accuracy $\delta$. The main advantage of this method is that the dependence of the number of gates on the precision is $\approx \log(1/\delta)$, whereas methods based on Suzuki-Trotter expansion have a polynomial dependence on $1/\delta$.

The algorithm is as follows. First, decompose the Hamiltonian as 
\begin{equation}\label{H_decomposition}
H=\sum_{\ell=1}^L \alpha_{\ell} H_{\ell}\, ,
\end{equation}    
where each $H_{\ell}$ is unitary. This decomposition is general, since any Hamiltonian can be written in such form. In fact, in many problems of interest the interactions are local and the Hamiltonian can be decomposed into a small number of easy to implement terms. Also, without loss of generality, we can define $H_{\ell}$ such that $\alpha_{\ell}>0$. Furthermore, we define $T:=\sum_{\ell=1}^L\alpha_{\ell} t$. This Hamiltonian simulation method is based on the implementation of an approximate version of the unitary 
\begin{equation}\label{hseg}
\Hamseg:=\exp(-i H t/r) \, ,
\end{equation}
where $r$ is the number of segments into which the time is divided. This operator can be approximated by the truncated Taylor series
\begin{equation}\label{utilde}
\appseg:=\sum_{k=0}^K \frac{1}{k!}(-i H t/r)^k.
\end{equation}
In this work, we take the order $K$ to be at least $2$.
If $\|\Hamseg-\appseg\|<\delta/r$, then we ensure that $\| U-\appseg^r\|<\delta$. It can be shown that, if $T/r$ is a constant, the error $\delta/r$ is bounded by $O(K^ {-K})$ and thus the truncation of the Taylor series at  
 \begin{equation}\label{scaling_K}
 K=O\left(\frac{\log(r/\delta)}{\log\log(r/\delta)}\right)
 \end{equation}
is enough to achieve the desired accuracy.
 
The implementation of $\appseg$ is achieved as follows. The unitary $\appseg$ can be expanded as a sum of unitaries as
 \begin{align}\label{expansion}
 \appseg&=\sum_{k=0}^K\sum_{\ell_1...\ell_k=1}^L \frac{(-it/r)^k}{k!}\alpha_{\ell_1}...\alpha_{\ell_k}H_{\ell_1}...H_{\ell_k}\nn
 &=\sum_{j\in \tilde{J}}\beta_j V_j \, ,
 \end{align}
where we define the truncated index set
\begin{equation}
\tilde{J}\coloneqq
 \{(k,\ell_1,...,\ell_k):k\in\{0,...,K\},\ell_1,...,\ell_k\in\{1,...L\}\} \, ,
\end{equation} 
such that $V_j$ is a unitary and $\beta_j$ is a positive coefficient, and are given by
\begin{align}
V_j := (-i)^k H_{\ell_1}\dots H_{\ell_k}\, , \qquad
\beta_j := \frac{(t/r)^k}{k!}\alpha_{\ell_1}...\alpha_{\ell_k}\, .
\end{align}
To implement this operator on a quantum computer, the following mechanism is introduced
\begin{equation}\label{selectV}
\text{select}(V)\ket{j}\ket{\psi}=\ket{j}V_j\ket{\psi} \, ,
\end{equation}
which is an operation controlled on the ancilla states $\ket{j}$, and $\ket{\psi}$ is some quantum state.
This operator can be implemented using $K$ controlled-$H_\ell$ operations that act on a state of the form $\ket{b}\ket{\ell}\ket{\psi}$ and give $\ket{b}\ket{\ell}(-iH_\ell)^b\ket{\psi}$ \cite{BCCKS15}.
Henceforth in this paper, when we refer to the complexity of the algorithm or a certain operation, it will be in terms of the number of controlled-$H_\ell$ operations used.
The cost of this operation in terms of universal gates was analyzed in Ref.~\cite{BCCKS15}. 

Furthermore, a unitary transformation $B$ acting on the ancillas as
\begin{equation}\label{eq:B}
B\ket{0}=\frac{1}{\sqrt{s}}\sum_{j\in \tilde{J}}\sqrt{\beta_j}\ket{j}\, ,
\end{equation}
is defined, where $s=\sum_{j\in \tilde{J}}\beta_j$ is the normalization factor.
It can be shown that the operator $W$ given by
\begin{equation}
W:= (B^{\dagger}\otimes\mathbb{I})\left[\text{select}(V)\right](B\otimes \mathbb{I})\, ,
\end{equation}
acts as
\begin{equation}\label{W_operator}
W\ket{0}\ket{\psi}=\frac{1}{s}\ket{0}\appseg\ket{\psi}+\sqrt{1-\frac{1}{s^2}}\ket{\phi}\, ,
\end{equation}
where the ancillary state of $\ket{\phi}$ has no support in $\ket{0}$. The component of the wavefunction that we are interested in is the one flagged by the ancilla qubit being in state $\ket{0}$. This amplitude decreases with $s$ but can be amplified to $1$ using Oblivious Amplitude Amplification (OAA) \cite{BCCKS15}.
When $s<2$, it is trivial to include an ancilla qubit to increase $s$ to $2$, as discussed in \cite{BCCKS15}.
For simplicity in the following discussion, if $s$ would otherwise be less $<2$, we will assume that the ancilla qubit is added to increase $s$ to $2$.
When $s=2$, one step of OAA is enough to amplify the amplitude of the component with $\ket{0}\appseg\ket{\psi}$ in Eq.~\eqref{W_operator} to $\approx 1$.  
It can be shown that  
\begin{equation}
s= \sum_{j\in \tilde{J}} \beta_j= e^{T/r}-\sum_{k=K+1}^{\infty} \frac{(T/r)^k}{k!}= e^{T/r}+ O(\delta/r)\, .
\end{equation}  
Hence, we would like the number of segments $r$ to scale as $T$ so that OAA only gives an $O(1)$ overhead to the algorithm.
In Ref.~\cite{BCCKS15}, $r$ is taken to be the smallest integer such that $s\le 2$, which corresponds to $r\approx T/\ln(2)$.
Here we choose $r$ to be a larger integer satisfying $r>4T$.

For $s=2$, after one step of OAA, we obtain the state 
\begin{equation}\label{onesegment}
\ket{\Psi}=\ket{0}\oaa \ket{\psi}+ \ket{\perp}\, ,
\end{equation}
where the ancilla components of the state $\ket{\perp}$ are orthogonal to $\ket{0}$, and 
\begin{equation}\label{voaa}
\oaa =\appseg\left(\frac{3}{2}\openone -\frac{1}{2}\appseg^{\dagger}\appseg\right).
\end{equation}
Using Lemma 6 of \cite{BCK15} $\left\|\Hamseg-\oaa \right\|=O(\delta/r)$, which implies that $\left\|\ket{\Psi}-\ket{0}\Hamseg\ket{\psi}\right\|=O(\delta/r)$. By repeating this process $r$ times using fresh ancillas, we obtain the state
\begin{equation}\label{finalstate}
\ket{\Psi_r}=\ket{0}^{\otimes r}\otimes \oaa^r\ket{\psi}+\ket{\perp_2} \, .
\end{equation}
This way, we achieve the desired accuracy since 
\begin{equation}
\left\|\oaa^r-U\right\|=O(\delta).
\end{equation}
The complexity of the algorithm stems from the fact that the implementation of a truncated Taylor series at order $K$ has a complexity of $O(K)$.
Since this must be done $r$ times and $r=O(T)$, the overall complexity in  Ref.~\cite{BCCKS15} is
\begin{equation}
O\left(T \frac{\log{(T/\delta)}}{\log\log(T/\delta)}\right).
\end{equation}
In the next subsection we summarize the idea of how a better scaling can be achieved by applying a correction operator at the end of this simulation procedure. 
\subsection{Summary of the corrected Taylor series approach}
In the corrected Taylor series approach we consider $\delta$ to be a constant less than $1/2$,
which is the accuracy of the first part of the Hamiltonian simulation algorithm, and we would like to achieve a final accuracy of the simulation algorithm of $\epsilon< \delta$.
To do so, we implement a correction operator $\appcor$ such that 
\begin{equation}\label{propcorrection}
\left\|\appcor \oaa^r-U\right\|<\epsilon \, .
\end{equation}
Since $\oaa^r$ is a function of $\appseg$, which is a series of powers of the Hamiltonian $H$, we can write the correction operator as 
\begin{equation}\label{defcorrection}
\appcor=\sum_{k=0}^Q a_k \mathcal{H}^k ,
\end{equation}
where $\mathcal{H}=-iH$.
The coefficients $a_k$ are chosen such that, if there was an infinite sum, the correction would give exactly the desired operation $U$.
The truncation order $Q$ is chosen so as to achieve the desired accuracy $\epsilon$.
The implementation of $\appcor$ can be achieved in an analogous way as for $\appseg$.
That is, after obtaining the state in Eq.~\eqref{finalstate}, another ancilla is appended of dimension $Q+1$,
and the procedure for implementing sums of unitaries is followed in order to obtain the corrected state
\begin{equation}\label{aftercor}
\ket{\Psi_C}=\ket{0}^{\otimes r+1}\otimes \frac{1}{s_C}\appcor \oaa^r\ket{\psi}+\ket{\perp_3}\, ,
\end{equation}
with $s_C=\sum_{k=0}^Q|a_k|A^k$, for $A=\sum_{\ell=1}^L \alpha_{\ell}$. We show in Lemma \ref{lemma1} of Sec.~\ref{sec:corrected_Taylor} that $s_C\le 2$ and so it is possible to use a single step of OAA to amplify the component of the state with $\appcor \oaa^r\ket{\psi}$.
Let us define $\appu=\appcor \oaa^r$.
After OAA, we have implemented the operator 
\begin{equation}
\appu_{\rm oaa}=\appu\left(\frac{3}{2}\openone-\frac{1}{2}\appu^{\dagger}\appu\right) ,
\end{equation}
which is analogous to Eq.~\eqref{voaa}. From Eq.~\eqref{propcorrection} and Lemma 6 of \cite{BCK15}, this operator is within $O(\epsilon)$ of the desired operator $U$, so the desired accuracy is achieved.

The other crucial factor for the complexity is the order $Q$ at which $\appcor$ is truncated. We show that the desired accuracy $\epsilon$ is achieved for 
\begin{equation}
Q=O\left(T+\log(1/\epsilon)\right)\, ,
\end{equation}
which gives the complexity of the implementation of the correction. The final algorithmic complexity after the correction is thus
\begin{equation}\label{new_bound}
O\left(T \frac{\log{(T)}}{\log\log(T)}+\log(1/\epsilon)\right).
\end{equation}
This is an improvement over the complexity from Ref.~\cite{BCCKS15} of
\begin{equation}\label{old_bound}
O\left(T \frac{\log{(T/\epsilon)}}{\log\log(T/\epsilon)}\right) ,
\end{equation}
since the dependence on the error and time appear as a sum and not as a product.
In particular, if $\epsilon\approx \exp{(-T)}$, which can happen if we need a very low error and/or the simulation time is small, the complexity from Eq.~\eqref{new_bound} gives a close to quadratic improvement over the complexity in Eq.~\eqref{old_bound}. 

When applied to the problem of sparse Hamiltonian simulation, we have $T\leq \tau'=t\|H\|_{\max} d^2$, where $d$ is the sparsity.
Hence, the bound obtained with the corrected Taylor series approach is closer to the proven lower bound from Ref.~\cite{BCK15} of 
\begin{equation}
O\left(\tau+\frac{\log(1/\epsilon)}{\log{\log(1/\epsilon)}}\right),
\end{equation}
with $\tau=t\|H\|_{\max} d$.
Although the dependence on the sparsity is in general worse, for particular applications there is often a known decomposition of the form of Eq.~\eqref{H_decomposition}, so the square in the dependence on $d$ can be eliminated.
A better dependence on sparsity can be obtained using other approaches \cite{berry2016corrected,low}.  

Before proceeding with the technical proofs in the next section, we summarize the proposed algorithm. We assume we are given a state $\ket{\psi}$ which represents the initial state of the quantum system whose dynamics we want to simulate.
\RestyleAlgo{boxruled}
\LinesNumbered
\begin{algorithm}[h!]
\caption{Hamiltonian simulation with a corrected Taylor series \label{alg1}}
\begin{enumerate}
\item For segments $1$ to $r$ with $r\in \Theta(T)$, perform the following steps.
\begin{enumerate}
\item Append an ancilla of dimension $K+1$ in state $\ket{0}$ and apply the operator $B$ as in Eq.~\eqref{eq:B}.
\item Perform the controlled unitary operation ${\rm select}(V)$ given in Eq.~\eqref{selectV}.
\item Apply the operator $B^{\dagger}$ to the ancilla to obtain the state described in Eq.~\eqref{W_operator}. 
\item Apply one step of OAA, as described in Lemma 5 of Ref.~\cite{BCK15}.
This results in the implementation of the\\ operation $\oaa$, as defined in Eq.~\eqref{voaa}, with success flagged by a zero in the ancilla.
\end{enumerate}
\item Apply the correction $\appcor$ defined in Eq.~\eqref{defcorrection} via an ancilla of dimension $Q+1$ and the unitary ${\rm select}(V)$, following \\ an analogous procedure to 1 (a)-(c). This yields the state in Eq.~\eqref{aftercor}.
\item Apply a single step of {\obv} on steps 1 and 2 above.
\end{enumerate}
\end{algorithm}

\section{Hamiltonian simulation with a corrected Taylor series}\label{sec:corrected_Taylor}
In this section we present and prove the main result of this work.
\begin{theorem}
\label{thm:corthm}
A Hamiltonian $H=\sum_{\ell=1}^ L \alpha_{\ell} H_{\ell}$, where $H_{\ell}$ are unitary matrices and $\alpha_{\ell}>0$ can be simulated for time $t$ within error $\epsilon>0$ with an overall complexity in terms of controlled-$H_\ell$ gates of
\begin{equation}
O\left(T \frac{\log T}{\log\log T} + \log(1/\epsilon) \right),
\end{equation}
where $T := \sum_{l=1}^L \alpha_{\ell} t$.
\end{theorem}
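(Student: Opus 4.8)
The plan is to prove the theorem in the two stages laid out in Algorithm~\ref{alg1}: first run the truncated--Taylor--series algorithm of \cite{BCCKS15} to a \emph{constant} accuracy $\delta<1/2$, producing the operator $\oaa^r$ with $\|\oaa^r-U\|=O(\delta)$, and then restore accuracy $\epsilon$ with a single polynomial correction $\appcor$ followed by one step of OAA. For the first stage there is nothing new to prove: by Section~\ref{background}, with $r=\Theta(T)$ segments and truncation order $K=O(\log(r/\delta)/\log\log(r/\delta))$ each segment costs $O(K)$ controlled-$H_\ell$ operations (including the per-segment OAA); since $\delta$ is constant and $r=\Theta(T)$ this is $K=O(\log T/\log\log T)$, so the first stage costs $O(rK)=O\!\left(T\,\tfrac{\log T}{\log\log T}\right)$.

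The work is in the correction. The structural point is that $\oaa=\appseg(\tfrac32\openone-\tfrac12\appseg^{\dagger}\appseg)$ is a polynomial in $\mathcal H=-iH$, and viewed as a function of $\mathcal H$ it factors as $\oaa=\Hamseg\,(\openone+\mu(\mathcal H))$, where $\mu$ is an entire function whose size is controlled by the tails $\sum_{k>K}(\cdot)^k/k!$ of the truncated exponential series defining $\appseg$ and $\appseg^{\dagger}$; in particular $\|\mu(\mathcal H)\|=\|\oaa-\Hamseg\|=O(\delta/r)$. Hence $\oaa^r=U\,(\openone+\mu(\mathcal H))^r$, so the exact correction is the function $\phi(x):=e^{xt}/\big(e^{xt/r}(1+\mu(x))\big)^r=(1+\mu(x))^{-r}$, which obeys $\phi(\mathcal H)\,\oaa^r=U$. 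I would therefore take $\appcor:=\sum_{k=0}^{Q}a_k\mathcal H^k$ with $a_k$ the Taylor coefficients of $\phi$ at $0$, implemented as a sum of unitaries exactly as for $\appseg$ at cost $O(Q)$, with normalization $s_C=\sum_{k=0}^{Q}|a_k|A^k$, where $A=\sum_\ell\alpha_\ell\ge\|H\|$ so that the spectrum of $\mathcal H$ lies in $\{|x|\le A\}$.

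Two quantitative facts then finish the argument: (i) $s_C\le 2$, and (ii) $\|\appcor\oaa^r-U\|=O(\epsilon)$ for $Q=O(T+\log(1/\epsilon))$ (in fact $Q=O(\log(1/\epsilon))$ suffices). Both come from a Cauchy estimate $|a_k|\le M\rho^{-k}$ on a circle $|x|=\rho$ with $\rho$ a suitable fixed constant multiple of $A$, where $M=\max_{|x|=\rho}|\phi(x)|=\max_{|x|=\rho}|1+\mu(x)|^{-r}$. The crux --- and what I expect to be the main obstacle --- is to extend the per-segment error bound $|\mu(x)|=O(\delta/r)$, which \cite{BCCKS15} establishes on the real axis (eigenvalues of $H$), to the whole disk $|x|\le\rho$ with $\rho=\Theta(A)$; this amounts to bounding $\sum_{k>K}(xt/r)^k/k!$ for complex $x$ with $|xt/r|=\Theta(T/r)=\Theta(1)$, and it is precisely here that the scaling $K=\Theta(\log r/\log\log r)$, with its implied constant chosen large enough relative to $\rho/A$, is used to overcome the factor $r$ in $(1+\mu)^{-r}$. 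Granting $|\mu(x)|=O(\delta/r)$ on $|x|\le\rho$ one gets $M=(1+O(\delta/r))^{-O(r)}=O(1)$, and indeed $M$ can be pushed below any fixed constant $>1$, so $s_C\le M\sum_{k\ge0}(A/\rho)^k=M\,\rho/(\rho-A)\le 2$ provided $\rho$ is taken large enough (this is the content of the lemma on $s_C$). For (ii), since $\phi(\mathcal H)\oaa^r=U$ and $\|\mathcal H\|\le A$, $\|\appcor\oaa^r-U\|=\|(\appcor-\phi(\mathcal H))\oaa^r\|\le\big(\sum_{k>Q}|a_k|A^k\big)\|\oaa^r\|\le (1+O(\delta))\,M\sum_{k>Q}(A/\rho)^k=O\!\big((A/\rho)^Q\big)$, which is $\le\epsilon$ once $Q=O(\log(1/\epsilon))$, hence certainly for $Q=O(T+\log(1/\epsilon))$.

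Finally, because $s_C\le 2$ (padding with one ancilla qubit to make it exactly $2$, as in Section~\ref{background}), one step of OAA applied to $\appu=\appcor\oaa^r$ implements $\appu_{\rm oaa}=\appu(\tfrac32\openone-\tfrac12\appu^{\dagger}\appu)$, and Lemma~6 of \cite{BCK15} together with $\|\appu-U\|=O(\epsilon)$ gives $\|\appu_{\rm oaa}-U\|=O(\epsilon)$; rescaling $\epsilon$ by a constant yields the stated accuracy. Adding the two stages, the controlled-$H_\ell$ cost is $O(rK+Q)=O\!\left(T\,\tfrac{\log T}{\log\log T}+\log(1/\epsilon)\right)$, as claimed. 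The only genuinely delicate point is the uniform bound on $|\mu(x)|$ over the complex disk, together with the bookkeeping on $r$, $K$ and $\rho$ needed to make $s_C\le 2$ hold with a margin.
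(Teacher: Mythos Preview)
Your plan is correct and would prove the theorem, but the route differs from the paper's. Both arguments write the exact correction as $\corr=(1+\mu)^{-r}$ (the paper writes this as $(\openone-W)^{-r}$ with $W=-\mu$) and then control two things: the sum $s_C=\sum_k|a_k|A^k$ and the tail $\sum_{k>Q}|a_k|A^k$. The paper attacks both via the $\sa$ functional: it expands $W$ explicitly in $\appseg,\appseg^{\dagger},\Delta,\Delta^{\dagger}$, uses submultiplicativity of $\sa$ to get $\sa(W)\le 2\delta/r+O(\delta^2/r^2)$ and hence $\sa(\corr)\le 1+2\delta+O(\delta^2/r)$ (Lemma~\ref{lemma1}); for the tail it uses the ``$x$--scaling'' trick $\sum_{k>Q}|a_k|A^k\le x^{-Q-1}\corrp(x)$ with $x=2$, bounding $\corrp(2)\le 2^r$, which forces $Q=O(T+\log(1/\epsilon))$ (Lemma~\ref{lemma2}). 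You instead go through complex analysis: a single Cauchy estimate $|a_k|\le M\rho^{-k}$ with $M=\max_{|x|=\rho}|(1+\mu(x))^{-r}|$ on a disk of radius $\rho=cA$ handles both points at once. Your identification of the delicate step is exactly right: the per--segment bound $|\mu|=O(\delta/r)$, which the paper only needs on the spectrum of $\mathcal H$, must be pushed out to the full disk, and this is where the freedom in the constants in $r\in\Theta(T)$ and $K\in\Theta(\log r/\log\log r)$ is spent. Once that is done, $M=(1-O(1/r))^{-r}=O(1)$, giving $s_C\le M\rho/(\rho-A)\le 2$ and a tail $O((A/\rho)^Q)$.

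What each approach buys: the paper's $\sa$ calculus is entirely elementary (no contour integrals) and makes the algebraic structure of $W$ explicit, at the price of a looser truncation bound $Q=O(T+\log(1/\epsilon))$ coming from the $2^r$ factor in $\corrp(2)$. Your Cauchy route is more unified---one estimate does both lemmas---and in fact yields the sharper $Q=O(\log(1/\epsilon))$ you note, since $M=O(1)$ rather than $2^r$; this does not change the headline complexity because the first stage already contributes $\Theta(T\log T/\log\log T)$, but it is a genuine tightening of Lemma~\ref{lemma2}. The trade--off is that your constants ($c=\rho/A$, the implied constant in $K$, and the threshold for $r/T$) are coupled and must be chosen consistently so that simultaneously $|\mu|\le C/r$ on $|x|=\rho$ with $e^{C}\le 2(1-1/c)$; you flag this bookkeeping, and it does go through (for instance $r>4T$, $c=3$, and $K$ large enough that the tail $\sum_{k>K}(3/4)^k/k!\lesssim 0.1/r$).
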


To prove this result we first need the following Lemma.
\begin{lemma}
\label{lemma1}
When simulating Hamiltonian evolution using Algorithm 1,
given that $\large\|\Hamseg-\appseg\|<\delta/r$, the correction operator $\appcor$
\begin{equation}\label{defcorrection1}
\appcor=\sum_{k=0}^Q a_k \mathcal{H}^k=\sum_j \eta_j \tilde{V}_j\, ,
\end{equation}
satisfies $\sum_j |\eta_j|\leq 1+ 2\delta + O(\delta^2/r)$.
\end{lemma}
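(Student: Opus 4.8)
The plan is to make the ansatz that the infinite-order correction $\corr=\sum_{k=0}^\infty a_k \mathcal{H}^k$ should satisfy $\corr \, \oaa^r = U$ exactly as a formal power series in $\mathcal{H}=-iH$, and then bound the truncated tail. First I would observe that $\oaa^r$ is itself a polynomial (or at least an analytic function) in $\mathcal{H}$: since $\appseg=\sum_{k=0}^K \frac{1}{k!}(\mathcal{H}t/r)^k$ is a polynomial in $\mathcal{H}$, so is $\oaa=\appseg(\tfrac32\openone-\tfrac12\appseg^\dagger\appseg)$ (using $\appseg^\dagger\appseg$ is a polynomial in $\mathcal{H}$ since $H$ is Hermitian), and hence so is $\oaa^r$. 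Writing $\oaa^r = \sum_k g_k \mathcal{H}^k$ with $g_0=1$, the formal inverse $\corr$ exists and is unique; equivalently, defining the scalar generating function $f(x):=\sum_k g_k x^k$ (the value of $\oaa^r$ on the eigenvalue $x$ of $\mathcal{H}$), we want $\corr$ to be the Taylor series of $e^{xt}/f(x)$ about $x=0$, since $U=e^{\mathcal{H}t}$ acts as $e^{xt}$. The coefficients $a_k$ are then read off from this scalar function, and $\appcor$ is its truncation at order $Q$.

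Next I would estimate $\sum_j|\eta_j|$. From the decomposition $H=\sum_\ell \alpha_\ell H_\ell$ with each $H_\ell$ unitary, expanding $\mathcal{H}^k=(-i)^k(\sum_\ell\alpha_\ell H_\ell)^k$ into a sum of unitaries $\appcor=\sum_j\eta_j\appcor_j$ gives $\sum_j|\eta_j|\le \sum_{k=0}^Q|a_k|\,A^k=s_C$, where $A=\sum_\ell\alpha_\ell$ — exactly the quantity called $s_C$ in Eq.~\eqref{aftercor}. So the lemma reduces to showing $\sum_{k=0}^Q|a_k|A^k \le 1+2\delta+O(\delta^2/r)$. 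To get this I would work with the scalar function and bound $|a_k|$ by a contour integral (Cauchy estimate) $a_k=\frac{1}{2\pi i}\oint \frac{e^{zt}/f(z)}{z^{k+1}}\,dz$ on a circle of radius slightly larger than $A$; the key input is that $f(z)=e^{zt}+\text{(error)}$ is uniformly close to $e^{zt}$ on such a circle, with the error controlled by the hypothesis $\|\Hamseg-\appseg\|<\delta/r$ together with $\|\oaa-\Hamseg\|=O(\delta/r)$ (Lemma 6 of \cite{BCK15}), compounded over $r$ segments to give $\|\oaa^r-U\|=O(\delta)$, hence $|f(z)-e^{zt}|=O(\delta)$ after also accounting for the analytic continuation off the real axis. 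Writing $e^{zt}/f(z)=1/(1+(f(z)-e^{zt})e^{-zt})$ and expanding, the "$1$" contributes $a_k=\delta_{k,0}$ (giving the leading $1$), the first-order term contributes the $2\delta$, and the remainder is $O(\delta^2/r)$; the factor $2$ rather than $1$ comes from the geometric-series/OAA-type amplification, and the $1/r$ in the quadratic term reflects that the per-segment error is $\delta/r$ and the quadratic cross-terms do not fully accumulate.

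The main obstacle I anticipate is making the "analytic continuation" step rigorous: the hypothesis controls $\|\Hamseg-\appseg\|$ as an \emph{operator norm} on the real spectrum of $H$, but the Cauchy estimate needs control of the scalar function $f(z)$ for \emph{complex} $z$ with $|z|\approx A$, where $e^{zt}$ and the Taylor-series error can both be large. The honest way around this is probably to avoid complex contours entirely and instead bound $\sum_k|a_k|A^k$ directly by a real recursion: the identity $\appcor\,\oaa^r\approx U$ gives a triangular linear system for the $a_k$ in terms of the $g_k$, and one can bound the generating function $\sum_k|a_k|A^k$ by $\sum_k|g'_k|A^k$ where $g'$ are the coefficients of $U-\oaa^r$ composed appropriately — i.e., bound the norm of the correction by the norm of the thing it is correcting, $\|U-\oaa^r\|\le$ something like $2\delta$, via a Neumann-series argument on $\oaa^{-r}=(U-(U-\oaa^r))^{-1}U\cdots$. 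So the real work is (i) translating $\|\oaa^r-U\|=O(\delta)$ into a bound on $\sum_k|a_k|A^k$ through a carefully chosen norm on power series in which multiplication is submultiplicative (the norm $\sum_k|c_k|A^k$ is exactly such a norm), and (ii) tracking the constant to land on $1+2\delta$ rather than, say, $1+O(\delta)$, which forces one to isolate the linear term and show its coefficient is exactly $2$ coming from one OAA step plus the accumulation over segments.
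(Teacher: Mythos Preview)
Your overall framework is right --- define the exact correction $\corr$ by $\corr\,\oaa^r=U$, regard everything as a power series in $\mathcal{H}$, and bound $\sum_k|a_k|A^k$ using the submultiplicative ``coefficient norm'' $\sa(F)=\sum_n|F_n|A^n$. This is exactly the norm the paper uses. But there is a genuine gap in how you propose to get the bound.

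Your Neumann-series suggestion takes as input the \emph{operator-norm} fact $\|U-\oaa^r\|=O(\delta)$ and tries to convert it into a bound on $\sa(\corr)$. This does not work: the two norms are not comparable in the direction you need. In particular $\sa(U)=e^{At}=e^T$ is exponentially large, so writing $\corr=U\,\oaa^{-r}$ and expanding $\oaa^{-r}=(U-(U-\oaa^r))^{-1}\cdots$ around $U$ gives you factors of $\sa(U)$ that swamp everything. Likewise the contour-integral route fails for the reason you already suspect: on a circle of radius $\gtrsim A$ the scalar function $e^{zt}/f(z)$ is of size $e^{O(T)}$, so Cauchy estimates give nothing close to $1+2\delta$. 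No amount of ``analytic continuation'' of the operator-norm hypothesis rescues this, because the hypothesis simply does not control the $\sa$-norm of $U-\oaa^r$.

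The missing idea is an algebraic factorization that makes the smallness visible in the $\sa$-norm directly. The paper sets $\Delta:=\Hamseg-\appseg$, for which $\sa(\Delta)=\sum_{k>K}(T/r)^k/k!\le\delta/r$ (this is a coefficient-norm statement about the Taylor tail, not a consequence of the operator-norm hypothesis), and then shows the exact identity
\[
\corr=(\openone-W)^{-r},\qquad W=\tfrac12\bigl[\appseg^\dagger\Delta-\Delta^\dagger\appseg+\text{(higher order in }\Delta)\bigr].
\]
Now $\sa(W)\le \sa(\appseg)\,\sa(\Delta)+O(\sa(\Delta)^2)\le 2\delta/r+O(\delta^2/r^2)$, using $\sa(\appseg)\le 2$ (this is where the constant $2$ comes from --- it is $\sa(\appseg)$, i.e.\ the OAA normalization, not a mysterious amplification factor). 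Then $\sa(\corr)\le(1-\sa(W))^{-r}\le(1-2\delta/r)^{-r}=1+2\delta+O(\delta^2/r)$, and truncation only decreases $\sa$. The whole argument lives in the $\sa$-norm from start to finish; the operator norm never enters.
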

Before proceeding with the proof of this Lemma, it is useful to define the functional $\sa$ which acts on a function $F$ of a matrix $X$, with $F(X)=\sum_n F_n X^n$, as
\begin{equation}
\sa(F)=\sum_n |F_n| A^n\, ,
\end{equation}
where $A$ is a positive number.
The motivation for this definition comes from the method of implementing an operator given by sums of unitaries, explained briefly in Sec.~\ref{background} and in more detail in \cite{BCCKS15}.
Expressing $\appseg$ as a function of $\mathcal{H}$, the state after implementing $\appseg(\mathcal{H})$ before OAA given in Eq.~\eqref{W_operator} can be written as
\begin{equation}
\frac{1}{\sa(\appseg)}\ket{0}\appseg(\mathcal{H})\ket{\psi}+\sqrt{1-\frac{1}{\sa(\appseg)^2}}\ket{\phi}\, ,
\end{equation}
where in this case $A=\sum_{\ell=1}^L \alpha_{\ell}$.
This is the value of $A$ that will be used throughout this paper.
The quantity $\sa(F)$ is thus related to the number of steps of OAA needed to amplify the component of $\ket{0}F(\mathcal{H})\ket{\psi}$ to $\approx 1$.
The definition here is slightly different from the one in Ref.~\cite{berry2016corrected}, because we include the factor $A$ which comes from the decomposition of $H$ in terms of unitary operators as in Eq.~\eqref{H_decomposition}.
It is simple to show that the functional $\sa$ obeys the following properties
for functions $F$ and $G$ and scalars $\alpha,\beta \in \mathbb{C}$,
\begin{align}
\sa(\alpha F +\beta G) &\le |\alpha|\sa(F) + |\beta|\sa(G)\, , \label{props1} \\
\sa(FG) &\le \sa(F)\sa(G)\, . \label{props2}
\end{align}
These are the same properties as in Ref.~\cite{berry2016corrected}.
The proofs are identical, so will not be given here.
These properties will be useful in what follows.

\begin{proof}[Proof of Lemma \ref{lemma1}] This proof is similar to that for Lemma 2 of Ref.~\cite{berry2016corrected}. 
The operator implemented by the  algorithm before the correction is applied can be written as 
\begin{equation}
\oaa^r={\appseg}^r{\left(\frac{3}{2}\openone -\frac{1}{2}\appseg^{\dagger}\appseg\right)}^r .
\end{equation}
Let us define a \emph{perfect} correction operator 
such that it gives exactly the desired operation; that is
\begin{equation}\label{perfect_correction}
\corr \oaa^r = U\, .
\end{equation}
We can write $\corr$ as 
\begin{equation}
\corr = U \appseg^{-r}{\left(\frac{3}{2}\openone-\frac{1}{2}\appseg^{\dagger}\appseg\right)}^{-r}.
\end{equation}
If we now define 
\begin{equation}\label{eq:delta}
\Delta := \Hamseg-\appseg \, ,
\end{equation}
then in the same way as in Ref.~\cite{berry2016corrected} we can express $\corr$ as 
 \begin{align}\label{corofW}
\corr &=\left(\mathbb{I}-W\right)^{-r}\nn
&=\sum_{k=0}^{\infty} {r+k-1 \choose k} W^k\, ,
\end{align}
with 
\begin{align}
W &=\frac{1}{2} \left[\appseg^{\dagger}\Delta-\Delta^{\dagger} \appseg+\Delta^{\dagger}\Delta+ (\appseg^{\dagger})^2\Delta^2+(\Delta^{\dagger})^2\Delta^2+2\appseg^{\dagger}\Delta^{\dagger}\Delta^2+\appseg^{\dagger} \appseg\Delta\Delta^{\dagger} + \appseg\Delta(\Delta^{\dagger})^2 \right]. \label{defW}
\end{align}

Using the properties of the functional $\sa$ given in Eqs.~\eqref{props1} and \eqref{props2},
and regarding $\corr$, $\Delta$, $W$ and so forth as functions of $\mathcal{H}$, we obtain
\begin{equation}
\sa(\corr)\leq \sum_{k=0}^{\infty} {r+k-1 \choose k} \sa(W)^k = [1-\sa(W)]^r \,.
\end{equation}
Furthermore, we have that 
\begin{equation}
\sa(\Delta)=\sa(\appseg-\Hamseg)=\sum_{k=K+1}^\infty \frac{(T/r)^k}{k!}\leq \delta/r \, ,
\end{equation}
and $\sa(\appseg) \le 2$. Hence, using Eq.~\eqref{defW} we obtain
\begin{equation}
\sa(W)\leq 2\sa(\Delta) + 9[\sa(\Delta)]^2+6[\sa(\Delta)]^3+[\sa(\Delta)]^4\, .
\end{equation}
This implies that $\sa(W)\le 2\delta/r + O(\delta^2/r^2)$, and hence 
\begin{align}
\sa(\corr) &= [1-\sa(W)]^{-r}\nn
&\leq [1-2\delta/r+O(\delta^2/r^2)]^{-r}\nn
&=1+2\delta+\mathcal{O}(\delta^2/r) \, .
\end{align}
The actual correction operator implemented, $\appcor$, satisfies $\sa(\appcor)<\sa(\corr)$.
The value of $\sa(\appcor)$ corresponds to the sum of the absolute values of $\eta_j$.
Hence we obtain $\sum_j|\eta_j|\le 1+2\delta+O(\delta^2/r)$, as required. 
\end{proof}

Next, we prove a Lemma regarding the value of $Q$ needed in order for the correction operator to be sufficiently accurate.
\begin{lemma}\label{lemma2}
When simulating a Hamiltonian using Algorithm 1, there exists a truncation
$Q=O(T+\log(1/\epsilon))$
such that the correction operator $\appcor$ yields final error no greater than $\epsilon$.
\end{lemma}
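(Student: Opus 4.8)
The plan is to bound how far the truncated correction $\appcor=\sum_{k=0}^{Q}a_k\mathcal{H}^k$ is from the \emph{perfect} correction $\corr=\sum_{k=0}^{\infty}a_k\mathcal{H}^k$ of the proof of Lemma~\ref{lemma1}, which by construction satisfies $\corr\oaa^r=U$. Then $\appcor\oaa^r-U=(\appcor-\corr)\oaa^r$, so $\|\appcor\oaa^r-U\|\le\|\appcor-\corr\|\,\|\oaa^r\|$, and $\|\oaa^r\|=O(1)$ since $\|\oaa^r-U\|=O(\delta)$ with $U$ unitary and $\delta$ a fixed constant. Because $\|\mathcal{H}\|\le A$ we have $\|\appcor-\corr\|\le\sa(\appcor-\corr)=\sum_{k=Q+1}^{\infty}|a_k|A^k$, and because the last step of Algorithm~1 is one round of OAA, it suffices (by Lemma~6 of \cite{BCK15}, as explained after Eq.~\eqref{propcorrection}) to make $\|\appcor\oaa^r-U\|$ smaller than a fixed constant multiple of $\epsilon$. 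So the whole task reduces to showing that $\sum_{k=Q+1}^{\infty}|a_k|A^k$ can be made $O(\epsilon)$ with $Q=O(T+\log(1/\epsilon))$.

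To control that tail I would use the elementary bound that, for any $\rho>1$,
\begin{equation}
\sum_{k=Q+1}^{\infty}|a_k|A^k=\sum_{k=Q+1}^{\infty}|a_k|(\rho A)^k\rho^{-k}\le\rho^{-Q}\sum_{k=0}^{\infty}|a_k|(\rho A)^k=\rho^{-Q}\,s_{\rho A}(\corr)\, ,
\end{equation}
where $s_{\rho A}$ denotes the functional $\sa$ with $A$ replaced by $\rho A$. I would take $\rho=2$; this is where the enlarged choice $r>4T$ from Section~\ref{background} is used, since it gives $2T/r<1/2$, so the series obtained by applying $s_{2A}$ to $\appseg$, $\Delta$, $W$ and $\corr$ all converge and the estimates in the proof of Lemma~\ref{lemma1} carry over verbatim with $\sa$ replaced by $s_{2A}$. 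This yields $s_{2A}(\appseg)\le e^{2T/r}<2$ and
\begin{equation}
s_{2A}(\Delta)=\sum_{k=K+1}^{\infty}\frac{(2T/r)^k}{k!}\le 2\,\frac{(2T/r)^{K+1}}{(K+1)!}\le 2^{K+2}\,\frac{(T/r)^{K+1}}{(K+1)!}\le 2^{K+2}\sa(\Delta)\le 2^{K+2}\delta/r\, ,
\end{equation}
using $\sa(\Delta)\le\delta/r$ from Lemma~\ref{lemma1}; inserting this into Eq.~\eqref{defW} exactly as in Lemma~\ref{lemma1} gives $s_{2A}(W)=O(2^{K}\delta/r)$, and hence $s_{2A}(\corr)\le[1-s_{2A}(W)]^{-r}\le\exp\!\big(O(2^{K}\delta)\big)$ once $r$ is large enough that $s_{2A}(W)\le 1/2$.

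To finish, since $\delta$ is a fixed constant Eq.~\eqref{scaling_K} gives $K=O(\log r/\log\log r)$, so $2^{K}=r^{o(1)}$; as $r=\Theta(T)$ this is $O(T)$, whence $s_{2A}(\corr)\le e^{O(T)}$ and therefore $\|\appcor\oaa^r-U\|\le O(1)\cdot 2^{-Q}e^{O(T)}$, which drops below the required constant multiple of $\epsilon$ as soon as $Q=O(T+\log(1/\epsilon))$, as claimed. I expect the one delicate point to be exactly this last estimate: the bound $s_{2A}(\corr)\le e^{O(T)}$ is crude, since it exponentiates $2^{K}$, but it is the best one gets from the geometric series for $\corr$, and it is good enough only because the truncation order $K$ grows as slowly as $\log T/\log\log T$, making $2^{K}$ sub-polynomial in $T$ rather than polynomial; the remainder is a mechanical repetition of the $\sa$ bookkeeping of Lemma~\ref{lemma1} with $A\mapsto 2A$, legitimate because the enlarged choice $r>4T$ keeps $2T/r$ bounded away from $1$.
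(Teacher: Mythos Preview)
Your proposal is correct and follows the same overall strategy as the paper: bound $\|\appcor-\corr\|$ by $\rho^{-Q}$ times the ``inflated'' functional $s_{\rho A}(\corr)$ with $\rho=2$, then show $s_{2A}(\corr)\le e^{O(T)}$ so that $Q=O(T+\log(1/\epsilon))$ suffices. The one technical difference is in how the inflated $W$ is controlled. You reuse the Lemma~\ref{lemma1} estimates with $A\mapsto 2A$, obtaining $s_{2A}(\Delta)\le 2^{K+2}\delta/r$ and hence $s_{2A}(W)=O(2^{K}\delta/r)$, and then argue that $2^{K}=r^{o(1)}$ because $K=O(\log r/\log\log r)$. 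The paper instead rewrites $W$ in terms of $\vd:=\Hamseg^{\dagger}\Delta$ (Eq.~\eqref{WofV}) and bounds the Taylor coefficients of $\vd$ explicitly by $|b_k|\le (2t/r)^k/k!$; with $r\ge 4T$ and $x=2$ this gives $\vd^{+}(2)\lesssim 0.22$ and $W^{+}(2)\lesssim 0.29$, a constant \emph{independent of $K$}, so that $\corrp(2)\le 2^{r}$ directly. Your detour through $2^{K}$ is valid but unnecessary: you could have observed immediately that $s_{2A}(\Delta)=\sum_{k>K}(2T/r)^{k}/k!\le\sum_{k>K}(1/2)^{k}/k!$, which for $K\ge 2$ is already a small absolute constant, making the $2^{K}$ bookkeeping and the appeal to the slow growth of $K$ superfluous. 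The paper's $\vd$ reformulation buys the same thing more cleanly and yields the explicit bound $\|\appcor-\corr\|\le 2^{r-Q-1}$.
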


\begin{proof} From Eq.~\eqref{perfect_correction}, we obtain 
\begin{equation}
\left\|\appcor \oaa^r-U\right\|=\left\|(\corr-\appcor)\oaa^r\right\|\, ,
\end{equation}
so we need to choose a $Q$ such that 
$\|(\corr-\appcor)\oaa^r\|<\epsilon$.
We can bound this expression as
\begin{equation}
\left\|(\corr-\appcor)\oaa^r\right\|\leq\left\|\corr-\appcor\right\|\cdot \left\|\oaa^r\right\|\leq \left\|\corr-\appcor\right\|\, ,
\end{equation}
since $\left\|\oaa \right\|\leq 1$ so $\left\|\oaa^r\right\|\leq 1$.
In order to bound $\|\appcor-\corr\|$ it is useful to define the operator 
\begin{equation}\label{defV}
\vd :=\Hamseg^{\dagger}\Delta\, ,
\end{equation}
such that we can write $W$ as 
\begin{equation}\label{WofV}
W=\frac{\vd}{2}-\frac{\vd^{\dagger}}{2}+\vd^{\dagger}\vd+\frac{\vd^2}{2}-\frac{\vd^{\dagger}\vd^2}{2}\, .
\end{equation}
We can bound $\|\appcor-\corr\|$ using the following trick (also used in \cite{berry2016corrected}) valid for a constant $x\geq 1$:
\begin{align}\label{eq:findif}
\left\|\appcor-\corr\right\|&=\left\|\sum_{k=Q+1}^{\infty}a_k \mathcal{H}^k\right\| \nn
&\leq\sum_{k=Q+1}^{\infty}|a_k|~\| \mathcal{H}\|^k\nn
&\leq\sum_{k=Q+1}^{\infty}|a_k|~A^k\nn
&\leq\frac{1}{x^{Q+1}}\sum_{k=Q+1}^{\infty}|a_k|~(A x)^k\nn
&\leq\frac{1}{x^{Q+1}}\corrp(x)\, ,
\end{align}
where we used the fact that $\|H\|\leq \sum_{\ell=1}^L\alpha_\ell=A$ and defined 
\begin{equation}
\corrp{(x)}:=\sum_{k=0}^{\infty}|a_k|~(A x)^k\, .
\end{equation}
Also, let us define the coefficients $b_k$ and $c_k$ as the Taylor coefficients of $\vd$ and $W$, respectively, so that
\begin{align}
\vd=\sum_{k>K}^{\infty}b_k \mathcal{H}^k\, ,\qquad
W=\sum_{k>K}^{\infty}c_k \mathcal{H}^k\, .
\end{align}
Then we define the series with positive coefficients
\begin{align}
\vd^+(x):=\sum_{k>K}^{\infty}|b_k| (A x)^k\, ,\qquad
W^+(x):=\sum_{k>K}^{\infty}|c_k| (A x)^k\, .
\end{align}
Using Eq.~\eqref{corofW} we have 
\begin{equation}\label{eq:corp}
\corrp{(x)}\leq \left[\mathbb{I}-W^+(x)\right]^{-r} .
\end{equation}
From Eq.~\eqref{WofV} we can bound $W^+(x)$ as
\begin{equation}
W^+(x)\leq \vd^+(x)+\frac{3}{2}[\vd^+(x))]^2+\frac{1}{2}[\vd^+(x)]^3\, .
\end{equation}
At this point, we need to upper bound the coefficients $b_k$ of $\vd$. We have that  
\begin{align}
\vd&=\Hamseg^{\dagger}(\Hamseg-\appseg)\nn
&=\sum_{k=0}^{\infty}\sum_{k'>K}^{\infty} \frac{(-t/r)^k}{k!}\frac{(t/r)^{k'}}{k'!}\mathcal{H}^{k+k'}\nn
&=\sum_{m>K}^{\infty}\frac{(t/r)^m}{m!} \mathcal{H}^{m} \sum_{k>K}^m(-1)^{m-k}{m\choose k}\, . 
\end{align}
Hence, we obtain for $k>K$,
\begin{align}
|b_k|&=\frac{(t/r)^k}{k!} \sum_{j>K}^k(-1)^{k-j}{k\choose j}\nn
&\leq \frac{(t/r)^k}{k!} \sum_{j>K}^k {k\choose j}\nn
&\leq \frac{(2 t/r)^k}{k!}\, .
\end{align}
For $k\le K$ we have $b_k=0$. Using this bound, we obtain
\begin{align}
\vd^+(x)&=\sum_{k>K}^{\infty}|b_k| (A x)^k\nn
&\leq \sum_{k> K}^{\infty}\frac{(2t A x/r)^ n}{n!}\nn
&\leq\sum_{n> K}^{\infty}\frac{(2Tx/r)^n}{n!}\nn
&\le \sum_{n> K}^{\infty}\frac{(x/2)^n}{n!}\, .
\end{align}
In the last line we have used the fact that we have chosen $r\ge 4T$.
We also restrict to $K \geq 2$.
Choosing $x=2$, we obtain $\vd^+(x)\lesssim 0.22$ and $W^+(x)\lesssim 0.29$.
Then Eq.~\eqref{eq:corp} implies that 
\begin{equation}
\corrp(x)\leq 2^r \, .
\end{equation}
Using Eq.~\eqref{eq:findif} we obtain the bound 
\begin{equation}
\left\|\appcor-\corr\right\|\leq 2^{r-Q-1}\, .
\end{equation}
Thus, to ensure that the error in the whole quantum simulation algorithm is less than $\epsilon$, we choose $Q$ such that 
\begin{equation}
2^{r-Q-1}\leq \epsilon\,  .
\end{equation}
Because $r=O(T)$, this inequality can be achieved with
\begin{equation}
Q=O(T+\log(1/\epsilon)) \, .
\end{equation}
\end{proof}

Using the results of Lemma~\ref{lemma1} and Lemma~\ref{lemma2} we can finally prove the main result of our paper.\\

\begin{proof}[Proof of Theorem \ref{thm:corthm}.]
The second part of the algorithm requires the implementation of a correction operator 
\begin{equation}
\appcor=\sum_{k=0}^Q a_k \mathcal{H}^k.
\end{equation}
Since $H$ is given by a sum of unitary matrices $H_{\ell}$, we can write $\appcor$ as a sum of unitaries $\sum_j \eta_j \tilde{V}_j$, where each $\tilde{V}_j$ is given by a product of some of the unitary operators $H_{\ell}$.
The coefficients $a_k$ can be calculated from Eq.~\eqref{corofW} since the Taylor expansion of $W$ can be calculated from Eq.~\eqref{defW}.
The implementation of $\appcor$ is analogous to that of $\appseg$ from Eq.~\eqref{utilde}, and so it requires $O(Q)$ operations controlled on ancilla qubits \cite{BCCKS15} and a number of steps of OAA which depends on the quantity
\begin{equation}
\sa(\appcor)=\sum_j |\eta_j|\, .
\end{equation} 
In Lemma~\ref{lemma1}, we have shown that $\sa(\appcor)$ is bounded by $1+2\delta+O(\delta^2/r)$.
Therefore, by choosing $\delta$ to be slightly less than $1/2$, we find that $\sa(\appcor)\le 2$, and therefore the OAA can be achieved in a single step.
Moreover, in Lemma~\ref{lemma2} we have shown that the correction can yield error $<\epsilon$ with $Q=O(T+\log(1/\epsilon))$.
The complexity of the correction is therefore $O(T+\log(1/\epsilon))$.

The first part of the quantum simulation algorithm consists of implementing $\oaa^r$, which gives an approximation of the operator $U$ up to accuracy $\delta$. Using the results of \cite{BCCKS15}, the complexity is
\begin{equation}
O\left(T \frac{\log{(T/\delta)}}{\log\log(T/\delta)}\right) .
\end{equation}
Above we have found that we can take $\delta\approx 1/2$.
Therefore the complexity of the full algorithm is 
\begin{equation}
O\left(T \frac{\log{(T)}}{\log\log(T)}+\log(1/\epsilon)\right) .
\end{equation} \end{proof}
 
\section{Conclusions}\label{conclusions}
We have improved on the complexity of the truncated Taylor series method for Hamiltonian simulation by adding a new step to the procedure which involves the application of a correction operator. We have shown that this operator can be written as a sum of unitaries and that it can be implemented via Oblivious Amplitude Amplification. In general, the ideas presented in this paper and in \cite{berry2016corrected}, as well as the proof techniques used, are versatile enough to be applied to any implementation of sums of unitaries in a quantum computer. Furthermore, the truncated Taylor series method is applicable to the simulation of many Hamiltonians of interest, including quantum chemistry problems \cite{chem1, chem2}.
For this reason, the improvement to the complexity of the method presented in this work could be significant, particularly in the early stage of quantum computers  where the number of qubits and quantum gates available is highly limited. 

Further improvement could, in principle, be possible by considering multiple rounds of correction as discussed in \cite{berry2016corrected}. This would slightly improve the dependence of the complexity on the simulation time, but the proofs become much more intricate and so we leave that for future exploration.
   
\acknowledgements
We acknowledge support from IARPA contract number D15PC00242.
DWB is funded by an Australian Research Council Future Fellowship (FT100100761) and a Discovery Project (DP160102426).
LN thanks the support from Funda\c{c}\~{a}o para a Ci\^{e}ncia e a Tecnologia (Portugal), namely through programmes PTDC/POPH/POCH and projects UID/EEA/50008/2013, IT/QuSim, IT/QuNet, ProQuNet, partially funded by EU FEDER, and from the EU FP7 project PAPETS (GA 323901). Furthermore, LN acknowledges the support from the DP-PMI and FCT (Portugal) through scholarship SFRH/BD/52241/2013.


\end{document}